\newtheorem{thm}{Theorem}
\newtheorem{defi}{Definition}
\newtheorem{prop}{Proposition}
\newtheorem{lem}{Lemma}
\newtheorem{rmk}{Remark}
\begin{document}
\title{On the Simulatability Condition in Key Generation Over a Non-authenticated Public Channel}
\author{Wenwen Tu and Lifeng Lai\thanks{W. Tu and L. Lai are with Department of Electrical and Computer Engineering, Worcester Polytechnic Institute, Worcester, MA. Email: \{wtu, llai\}@wpi.edu. The work of W. Tu and L. Lai was supported by the National Science Foundation CAREER Award under Grant CCF-1318980 and by the National Science Foundation under Grant CNS-1321223. }}
\date{\today}
\maketitle
\pagestyle{plain}
\begin{abstract}
Simulatability condition is a fundamental concept in studying key generation over a non-authenticated public channel, in which Eve is active and can intercept, modify and falsify messages exchanged over the non-authenticated public channel. Using this condition, Maurer and Wolf showed a remarkable ``all or nothing'' result: if the simulatability condition does not hold, the key capacity over the non-authenticated public channel will be the same as that of the case with a passive Eve, while the key capacity over the non-authenticated channel will be zero if the simulatability condition holds. However, two questions remain open so far: 1) For a given joint probability mass function (PMF), are there efficient algorithms (polynomial complexity algorithms) for checking whether the simulatability condition holds or not?; and 2) If the simulatability condition holds, are there efficient algorithms for finding the corresponding attack strategy? In this paper, we answer these two open questions affirmatively. In particular, for a given joint PMF, we construct a linear programming (LP) problem and show that the simulatability condition holds \textit{if and only if} the optimal value obtained from the constructed LP is zero. Furthermore, we construct another LP and show that the minimizer of the newly constructed LP is a valid attack strategy. Both LPs can be solved with a polynomial complexity.
\end{abstract}
\begin{keywords}
Active adversary, Computational complexity, Farkas' lemma, Linear programming, Non-authenticated channel, Simulatability condition.
\end{keywords}
\section{Introduction}
The problem of secret key generation via public discussion under both source and channel models has attracted significant research interests\cite{Maurer:TIT:93,Ahlswede:TIT:93,Csiszar:TIT:00,Nitinawarat:TIT:10,Chan:TIT:10,Csiszar:TIT:04,Ye:TIT:12,Maurer:EUROC:97,Maurer:TIT:03,Maurer:TIT:031,Maurer:TIT:032}. Under the source model, users observe correlated sources generated from a certain joint probability mass function (PMF), and can discuss with each other via a noiseless public channel. Any discussion over the public channel will be overheard by Eve. Furthermore, the public channel can either be authenticated or non-authenticated. An authenticated public channel implies that Eve is a passive listener. On the other hand, a non-authenticated public channel implies that Eve is active and can intercept, modify or falsify any message exchanged through the public channel.

Clearly, the secret key rate that can be generated using the non-authenticated public channel is no larger than that can be generated using the authenticated pulic channel. In \cite{Maurer:EUROC:97,Maurer:TIT:03,Maurer:TIT:031,Maurer:TIT:032}, Maurer and Wolf introduced a concept of simulatability condition (this condition will be defined precisely in the sequel) and established a remarkable ``all or nothing'' result. In particular, they showed that for the secret key generation via a non-authenticated public channel with two legitimate terminals in the presence of an active adversary: 1) if the simulatability condition holds, the two legitimate terminals will not be able to establish a secret key, and hence the key capacity is 0; and 2) if the simulatability condition does not hold, the two legitimate terminals can establish a secret key and furthermore the key capacity will be the same as that of the case when Eve is passive. Intuitively speaking, if the simulatability condition holds, from its own source observations, Eve can generate fake messages that are indistinguishable from messages generated from legitimate users. On the other hand, if the simulatability condition does not hold, the legitimate users will be able to detect modifications made by Eve.


It is clear that the simulatability condition is a fundamental concept for the key generation via a non-authenticated public channel, and hence it is important to design efficient algorithms to check whether the simulatability condition holds or not. Using ideas from mechanical models, \cite{Maurer:TIT:031} made significant progress in designing efficient algorithms. In particular, \cite{Maurer:TIT:031} proposed to represent PMFs as mass constellations in a coordinate, and showed that the simulatability condition holds if and only if one mass constellation can be transformed into another mass constellation using a finite number of basic mass operations. Furthermore, \cite{Maurer:TIT:031} introduced another notion of one mass constellation being ``more centered'' than another constellation and designed a low-complexity algorithm to check this ``more centered'' condition. 
For some important special cases, which will be described precisely in Section~\ref{sec:pre}, \cite{Maurer:TIT:031} showed that the ``more centered'' condition is necessary and sufficient for the mass constellation transformation problem (and hence is necessary and sufficient condition for the simulatability condition for these special cases). However, in the general case, the ``more centered'' condition is a necessary but not sufficient condition for the mass constellation transformation problem. Hence, whether there exists efficient algorithms for the mass constellation transformation problem (and hence the simulatability condition) in the general case is still an open question.  

As the result, despite the significant progress made in~\cite{Maurer:TIT:031}, the following two questions remain open regarding the simulatability condition for the general case: \begin{enumerate}
\item For a given joint PMF, are there efficient algorithms (polynomial complexity algorithms) for checking whether the simulatability condition holds or not?
\item If the simulatability condition holds, are there efficient algorithms for finding the corresponding Eve's attack strategy?
\end{enumerate}
In this paper, we answer these two open questions affirmatively.

To answer the first open question, we construct a linear programming (LP) problem and show that the simulatability condition holds \textbf{if and only if} the optimal value obtained from this LP is zero. We establish our result in three main steps. We first show that, after some basic transformations, checking whether the simulatability condition holds or not is equivalent to checking whether there exists a nonnegative solution to a specially constructed system of linear equations. We then use a basic result from linear algebra to show that whether there exists a nonnegative solution to the constructed system of linear equations is equivalent to whether there is a solution (not necessarily nonnegative) to a related system of inequalities or not. Finally, we use Farkas' lemma~\cite{Alex:san:1998}, a fundamental result in linear programming and other optimization problems, to show that whether the system of inequalities has a solution or not is equivalent to whether the optimal value of a specially constructed LP is zero or not. Since there exists polynomial complexity algorithms for solving LP problems\cite{karmarkar:new:84,Gonzaga:PMP:05,Bazaraa:JWS:2011}, we thus find a polynomial complexity algorithm for checking the simulatability condition for a general PMF.

To answer the second open question, we construct another LP and show that the minimizer of this LP is a valid attack strategy. The proposed approach is very flexible in the sense that one can simply modify the cost function of the constructed LP to obtain different attack strategies. Furthermore, the cost function can be modified to satisfy various design criteria. For example, a simple cost function can be constructed to minimize the amount of modifications Eve needs to perform during the attack. All these optimization problems with different cost functions can be solved with a polynomial complexity.

The remainder of the paper is organized as follows. In Section \ref{sec:pre}, we introduce some preliminaries and the problem setup. In Section \ref{sec:MR}, we present our main results. In Section \ref{sec:eple}, we use numerical examples to illustrate the proposed algorithm. In Section \ref{sec:recom}, we present an approach to further reduce the computational complexity. In Section \ref{sec:conc}, we offer our concluding remarks.

\section{Preliminaries and Problem Setup }\label{sec:pre}

Let $\mathcal{X}=\{1,\cdots,|\mathcal{X}|\}$, $\mathcal{Y}=\{1,\cdots,|\mathcal{Y}|\}$ and $\mathcal{Z}=\{1,\cdots,|\mathcal{Z}|\}$ be three finite sets. Consider three correlated random variables $(X,Y,Z)$, taking values from $\mathcal{X}\times\mathcal{Y}\times\mathcal{Z}$, with joint PMF $P_{XYZ}$, the simulatability condition is defined as follows:
\begin{defi}{(\cite{Maurer:EUROC:97})}
For a given $P_{XYZ}$, we say $X$ is simulatable by $Z$ with respect to $Y$, denoted by $\text{Sim}_Y(Z\rightarrow X)$, if there exists a conditional PMF $P_{\bar X|Z}$ such that $P_{Y \bar X }=P_{YX}$, with
\begin{eqnarray}
P_{Y \bar X}(y,x)=\sum_{z\in \mathcal Z} P_{YZ}(y,z)\cdot P_{\bar X|Z}(x|z), \label{eq:pxy}
\end{eqnarray}
in which $P_{YX}$ and $P_{YZ}$ are the joint PMFs of $(Y,X)$ and $(Y,Z)$ under $P_{XYZ}$ respectively.
\end{defi}

One can also define $\text{Sim}_X(Z\rightarrow Y)$ in the same manner. This concept of simulatability, first defined in~\cite{Maurer:EUROC:97}, is a fundamental concept in the problem of secret key generation over a non-authenticated public channel~\cite{Maurer:TIT:03,Maurer:TIT:031,Maurer:TIT:032}, in which two terminals Alice and Bob would like to establish a secret key in the presence of an adversary Eve. These three terminals observe sequences $X^N$, $Y^N$ and $Z^N$ generated according to
\begin{eqnarray}
P_{X^NY^NZ^N}(x^N,y^N,z^N)=\prod_{i=1}^NP_{XYZ}(x_i,y_i,z_i).
\end{eqnarray}
Alice and Bob can discuss with each other via a public \textbf{non-authenticated} noiseless channel, which means that Eve not only has full access to the channel but can also interrupt, modify and falsify messages exchanged over this public channel. The largest key rate that Alice and Bob\footnote{Please see ~\cite{Maurer:TIT:03,Maurer:TIT:031,Maurer:TIT:032} for precise definitions.} can generate with the presence of the active attacker is denoted as $S^*(X;Y||Z)$. Let $S(X;Y||Z)$ denote the largest key rate that Alice and Bob can generate when Eve is passive, i.e., when the public channel is authenticated. Clearly, $S(X;Y||Z)\geq S^*(X;Y||Z)$. Although a full characterization of $S(X;Y||Z)$ is unknown in general, \cite{Maurer:TIT:03} established the following remarkable ``all or nothing'' result:

\begin{thm}{(\cite{Maurer:TIT:03})}
If $\text{Sim}_{Y}(Z\rightarrow X)$ or $\text{Sim}_{X}(Z\rightarrow Y)$, then $S^*(X;Y||Z)=0$. Otherwise, $S^*(X;Y||Z)=S(X;Y||Z)$.
\end{thm}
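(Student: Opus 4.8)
The plan is to prove the two halves separately; by symmetry I take $\text{Sim}_Y(Z\to X)$ to be the relevant instance (the case $\text{Sim}_X(Z\to Y)$ is identical after swapping the roles of $X$ and $Y$), and for the second half I use the already-noted trivial bound $S(X;Y||Z)\ge S^*(X;Y||Z)$, so only the reverse inequality needs work. For the first half — if $\text{Sim}_Y(Z\to X)$ then $S^*(X;Y||Z)=0$ — I would exhibit an explicit active strategy for Eve. Fix a simulator $P_{\bar X|Z}$ with $P_{Y\bar X}=P_{YX}$; applying it letter-by-letter to $Z^N$ yields a surrogate $\bar X^N$ with $P_{\bar X^N Y^N}=P_{X^N Y^N}$. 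Eve then impersonates Alice: she suppresses the genuine Alice's traffic and, toward Bob, runs Alice's prescribed (interactive, randomized) algorithm on the input $\bar X^N$. A coupling argument shows Bob's entire view — his source $Y^N$, the transcript, his local coins — is distributed \emph{exactly} as in an honest passive execution against a genuine Alice; hence Bob accepts with at least the honest probability, and on acceptance his key equals, with probability $1-o(1)$, the key computed by the surrogate-Alice (by the passive correctness guarantee applied to this ``virtual'' honest run). Since Eve \emph{is} the surrogate-Alice, she holds Bob's key, which destroys secrecy, so no positive rate survives.

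For the second half — if neither simulatability relation holds then $S^*(X;Y||Z)\ge S(X;Y||Z)$ — the idea is a two-stage protocol: first an \emph{authentication} stage, a challenge--response scheme in which Alice and Bob test the consistency of the other party's messages against their own $X^N$ (resp.\ $Y^N$); then, conditioned on passing, the optimal passive-model protocol achieving $S(X;Y||Z)$. The failure of $\text{Sim}_Y(Z\to X)$ means $P_{YX}$ lies outside the set $\{P_{Y\bar X}:P_{\bar X|Z}\ \text{a channel}\}$ of joint laws Eve can fabricate from $Z$; this set is the image of a product of probability simplices under the linear map in \eqref{eq:pxy}, hence compact, so non-membership yields a strict total-variation gap $\delta>0$ (and symmetrically for $\text{Sim}_X(Z\to Y)$). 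Strong-typicality estimates then turn $\delta$ into an exponentially small probability that adversary-injected messages pass the consistency test, so $o(N)$ repetitions drive undetected tampering to zero while leaking a vanishing rate. Conditioned on passing, the inner protocol runs on statistics indistinguishable from the passive (authenticated) channel and delivers rate $S(X;Y||Z)$, giving the claimed inequality and hence equality.

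I expect the delicate point in the first half to be enforcing consistency \emph{on Alice's side} while using only the one-sided hypothesis $\text{Sim}_Y(Z\to X)$: one must either verify, against the precise robustness definition, that an accepting-but-compromised Bob alone already refutes the protocol, or else arrange (again from $Z^N$ and the simulator) that Alice's view is likewise indistinguishable from some honest run, so that neither party produces a ``safe abort''. In the second half the crux is quantitative — upgrading the merely qualitative statement ``$\text{Sim}$ fails'' into a \emph{uniform} per-round detection advantage for an authentication sub-protocol whose own information leakage is asymptotically zero — which is exactly where compactness of the simulator set and continuity of total variation do the real work.
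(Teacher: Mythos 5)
There is nothing in this paper to compare your argument against: the statement is Theorem~1, imported verbatim from Maurer and Wolf \cite{Maurer:TIT:03}, and the authors give no proof of it --- only a one-sentence intuition for the ``nothing'' half (Eve passes $Z^N$ through $P_{\bar X|Z}$ to fabricate an $\bar X^N$ that Bob, knowing only $Y^N$, cannot distinguish from Alice's $X^N$). The paper's actual contributions concern how to \emph{decide} the simulatability condition, not this theorem. So I can only assess your sketch against the known Maurer--Wolf proof, which it does track in outline: the impersonation attack for the first half and the authenticate-then-run-the-passive-protocol architecture for the second are exactly their strategy. Your first half is essentially sound, and you correctly flag the one real subtlety (what the robustness definition demands of Alice's side when only the one-sided hypothesis $\text{Sim}_Y(Z\to X)$ is available); under the Maurer--Wolf definition an accepting Bob whose key Eve holds already refutes the protocol, so that worry resolves favorably.

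The second half, however, has a genuine gap beyond being a sketch. Your compactness argument gives a total-variation gap $\delta>0$ between $P_{YX}$ and every simulable $P_{Y\bar X}$, and typicality converts that into detection of a \emph{non-adaptive, one-shot} substitution of an entire fabricated source block. But the adversary in this model is interactive and adaptive: she observes Alice's genuine transcript (which is correlated with $X^N$ and $Y^N$), may modify only parts of messages, and may condition each forgery on everything seen so far. Reducing ``Eve forges a protocol message that passes Bob's consistency test'' to ``Eve simulates $X$ from $Z$'' --- uniformly over adaptive strategies and after conditioning on the public transcript, while keeping the authentication overhead's rate and leakage vanishing --- is the actual content of Maurer--Wolf Parts I--III, and your outline asserts rather than supplies it. The single-letter failure of $\text{Sim}_Y(Z\to X)$ does not by itself yield a per-round detection advantage against such an adversary; that implication is the theorem's hard direction, not a corollary of compactness and continuity of total variation.
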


This significant result implies that, if the simulatability condition does not hold, one can generate a key with the same rate as if Eve were passive. On the other hand, if the simulatability condition holds, the key rate will be zero. Intuitively speaking, if $\text{Sim}_Y(Z\rightarrow X)$ holds, then after observing $Z^N$, Eve can generate $\bar{X}^N$ by passing $Z^N$ through a channel defined by $P_{\bar{X}|Z}$. Then $(\bar{X}^N,Y^N)$ has the same statistics as $(X^N,Y^N)$. Hence by knowing only $Y^N$, Bob cannot distinguish $\bar{X}^N$ and $X^N$, and hence cannot distinguish Alice or Eve.

As mentioned in the introduction, \cite{Maurer:TIT:031} has made important progress in developing low-complexity algorithms for checking whether $\text{Sim}_{Y}(Z\rightarrow X)$ (or $\text{Sim}_{X}(Z\rightarrow Y)$) holds or not. In particular, \cite{Maurer:TIT:031} developed an efficient algorithm to check a related condition called ``more centered'' condition. When $|\mathcal{Y}|=2$, that is when $Y$ is a binary random variable, this ``more centered'' condition is shown to be necessary and sufficient for $\text{Sim}_{Y}(Z\rightarrow X)$. Hence, \cite{Maurer:TIT:031} has found an efficient algorithm to check $\text{Sim}_{Y}(Z\rightarrow X)$ for the special case of $Y$ being binary (the algorithm is also effective in checking $\text{Sim}_{X}(Z\rightarrow Y)$ when $X$ is binary). However, when $Y$ is not binary, the ``more centered'' condition is only a necessary condition for $\text{Sim}_{Y}(Z\rightarrow X)$. Hence, two questions remain open:
\begin{enumerate}
\item For a general given $P_{XYZ}$, are there efficient algorithms (polynomial complexity algorithms) for checking whether $\text{Sim}_{Y}(Z\rightarrow X)$ (or $\text{Sim}_{X}(Z\rightarrow Y)$) holds or not?
\item If $\text{Sim}_{Y}(Z\rightarrow X)$ (or $\text{Sim}_{X}(Z\rightarrow Y)$) holds, are there efficient algorithms for finding the corresponding $P_{\bar{X}|Z}$ (or $P_{\bar{Y}|Z}$)?
\end{enumerate}

In this paper, we answer these two open questions affirmatively.

\textit{Notations:} Throughout this paper, we use boldface uppercase letters to denote matrices, boldface lowercase letters to denote vectors. We also use $\mathbf 1$, $\mathbf 0$ and $\mathbf I$, unless stated otherwise, to denote all ones column vector, all zeros column vector and the identity matrix, respectively. In addition, we denote the vectorization of a matrix by Vec$(\cdot)$. Specifically, for an $m\times n$ matrix $\mathbf A$, Vec$(\mathbf A)$ is an $mn\times 1$ column vector:
\begin{eqnarray}
\text{Vec}(\mathbf A)=[a_{11},\cdots,a_{m1},\cdots,a_{1n},\cdots,a_{mn}]^T,
\end{eqnarray}
in which $[\cdot]^T$ is the transpose of the matrix. And vice versa can be done by $\mathbf A=$Reshape$(\text{Vec}(\mathbf A),[m,n])$.
We use $\mathbf A\otimes \mathbf B$ to denote the Kronecker product of matrices $\mathbf A$ and $\mathbf B$. Specifically, assume $\mathbf A$ is an $m\times n$ matrix, then
\begin{eqnarray}
\mathbf A\otimes \mathbf B=\left[
                \begin{array}{ccc}
                  a_{11}\mathbf B & \cdots & a_{1n}\mathbf B \\
                  \vdots & \ddots & \vdots \\
                  a_{m1}\mathbf B & \cdots & a_{mn}\mathbf B \\
                \end{array}
              \right].
\end{eqnarray}
All matrices and vectors in this paper are real.

\section{Main Results}\label{sec:MR}

In this paper, we focus on $\text{Sim}_Y(Z\rightarrow X)$. The developed algorithm can be easily modified to check $\text{Sim}_X(Z\rightarrow Y)$. We rewrite~\eqref{eq:pxy} in the following matrix form
\begin{eqnarray}
\mathbf C=\mathbf A \mathbf Q, \label{eq:xy1}
\end{eqnarray}
in which $\mathbf C=[c_{ij}]$ is a $|\mathcal Y|\times |\mathcal X|$ matrix with $c_{ij}=P_{YX}(i,j)$, $\mathbf A=[a_{ik}]$ is a $|\mathcal Y|\times |\mathcal Z|$ matrix with $a_{ik}=P_{YZ}(i,k)$, and $\mathbf Q=[q_{kj}]$ is a $|\mathcal Z|\times |\mathcal X|$ matrix with $q_{kj}=P_{\bar X|Z}(j|k)$ if such $P_{\bar X|Z}$ exists.

Checking whether $\text{Sim}_Y(Z\rightarrow X)$ holds or not is equivalent to checking whether there exists a transition matrix $\mathbf Q$ such that~\eqref{eq:xy1} holds. As $\mathbf Q$ is a transition matrix, its entries $q_{kj}$s must satisfy
\begin{eqnarray}
q_{kj}\geq 0,&&\quad \forall k\in [1:|\mathcal Z|], j\in[1:|\mathcal X|],\label{eq:xy2}\\
\sum_{j=1}^{|\mathcal X|}q_{kj}=1,&&\quad \forall k\in [1:|\mathcal Z|].\label{eq:xy3}
\end{eqnarray}
We note that if $q_{kj}$s satisfy~\eqref{eq:xy2} and~\eqref{eq:xy3}, they will automatically satisfy $q_{kj}\leq 1$. Hence, we don't need to state this requirement here.

If there exists at least one transition matrix $\mathbf Q$ satisfying \eqref{eq:xy1}, \eqref{eq:xy2} and \eqref{eq:xy3} simultaneously, we can conclude that the simulatability condition $\text{Sim}_Y(Z\rightarrow X)$ holds.

\eqref{eq:xy3} can be written in the matrix form
\begin{eqnarray}
\mathbf 1_{|\mathcal Z|\times 1}=\mathbf{Q}\mathbf 1_{|\mathcal X|\times 1},\label{eq:xy6}
\end{eqnarray}
Then, \eqref{eq:xy1} and \eqref{eq:xy6} can be written in the following compact form:
\begin{eqnarray}
&&\hspace{-10mm}\left[
  \begin{array}{c}
    \text{Vec}({\mathbf C}^T)\\
    \mathbf {1}_{|\mathcal Z|\times 1} \\
  \end{array}
\right]\nonumber \\&=&\left[
  \begin{array}{cccc}
    a_{11} \mathbf I & a_{12} \mathbf I & \cdots & a_{1|\mathcal Z|} \mathbf I \\
    \vdots & \vdots & \ddots & \vdots \\
    a_{|\mathcal Y|1} \mathbf I & a_{|\mathcal Y|2} \mathbf I & \cdots & a_{|\mathcal Y||\mathcal Z|} \mathbf I \\
    \mathbf 1 & \mathbf 0 & \cdots & \mathbf 0 \\
    \mathbf 0 & \mathbf 1 & \ddots & \vdots \\
    \vdots & \ddots & \ddots & \vdots \\
    \mathbf 0 & \cdots & \mathbf 0 & \mathbf 1 \\
  \end{array}
\right]\text{Vec}({\mathbf Q}^T) \nonumber\\
&=&\left[
     \begin{array}{c}
       \mathbf A\otimes \mathbf I \\
        \mathbf I_{|\mathcal Z|}\otimes \mathbf 1 \\
     \end{array}
   \right]\text{Vec}({\mathbf Q}^T),
\label{eq:xy7}
\end{eqnarray}
in which the sizes for $\mathbf I$, $\mathbf 1$ and $\mathbf 0$ are $|\mathcal X|\times |\mathcal X|$, $1\times |\mathcal X|$ and $1\times |\mathcal X|$, respectively.

For notational convenience, we define
   \begin{eqnarray}
&&   {\mathbf c}\triangleq\left[
  \begin{array}{c}
    \text{Vec}({\mathbf C}^T)\\
    \mathbf {1}_{|\mathcal Z|\times 1} \\
  \end{array}
\right],\label{eq:c}\\
&&\mathbb A\triangleq\left[
     \begin{array}{c}
       \mathbf A\otimes \mathbf I \\
        \mathbf I_{|\mathcal Z|}\otimes \mathbf 1 \\
     \end{array}
   \right],\label{eq:A}\\
&&   {\mathbf q}\triangleq\text{Vec}({\mathbf Q}^T). \label{eq:q}
   \end{eqnarray}
From~\eqref{eq:xy7}, it is clear that ${\mathbf c}$ is an $m\times 1$ vector, $\mathbb A$ is an $m\times n$ matrix, and ${\mathbf q}$ is an $n\times 1$ vector, in which
\begin{eqnarray}
m&=&|\mathcal Y||\mathcal X|+|\mathcal Z|,\label{eq:m}\\
n&=&|\mathcal Z||\mathcal X|\label{eq:n}.
\end{eqnarray}

With these notation and combining \eqref{eq:xy7} with \eqref{eq:xy2}, the original problem of checking whether $\text{Sim}_{Y}(Z\rightarrow X)$ holds or not is equivalent to checking whether there exists \textbf{nonnegative} solutions $\mathbf q$ for the system
\begin{eqnarray}
\mathbb A \mathbf q=\mathbf c. \label{eq:xy8}
\end{eqnarray}

In the following, we check whether there exists at least a nonnegative solution for the system defined by \eqref{eq:xy8}. There are two main steps: 1) whether the system is consistent or not; 2) if it is consistent, whether there exists a nonnegative solution or not. Checking the consistency of~\eqref{eq:xy8} is straightforward: a necessary and sufficient condition for a system of non-homogenous linear equations to be consistent is
\begin{eqnarray}
\text{Rank}(\mathbb A)=\text{Rank}((\mathbb A|\mathbf c)), \label{eq:xy9}
\end{eqnarray}
where $(\mathbb A|\mathbf c)$ is the augmented matrix of $\mathbb A$.
If \eqref{eq:xy9} is not satisfied, it can be concluded that $\text{Sim}_Y(Z\rightarrow X)$ does not hold. If \eqref{eq:xy9} is satisfied, we need to further check whether there exists a nonnegative solution to~\eqref{eq:xy8} or not.

To proceed further, we will need the following definition of generalized inverse (g-inverse) of a matrix $\mathbf G$.

\begin{defi}{(\cite{Rao:NYW:1971})}
For a given $m\times n$ real matrix $\mathbf G$, an $n\times m$ real matrix $\mathbf G^g$ is called a g-inverse of $\mathbf G$ if $$\mathbf G\mathbf G^g\mathbf G=\mathbf G.$$
\end{defi}

The g-inverse ${\mathbf G}^{g}$ is generally not unique (If $n=m$ and ${\mathbf G}$ is full rank, then ${\mathbf G}^{g}$ is unique and equal to the inverse matrix ${\mathbf G}^{-1}$). A particular choice of g-inverse is called the Moore-Penrose pseudoinverse $\mathbf{G}^+$, which can be computed using multiple different approaches. One approach is to use the singular value decomposition (SVD): by SVD, for a given $\mathbf G$ and its SVD decomposition
\begin{eqnarray}
{\mathbf G}=\mathbf U {\mathbf\Sigma}\mathbf V^T,\label{eq:xy10}
\end{eqnarray}
then, ${\mathbf G}^+$ can be obtained as
\begin{eqnarray}
{\mathbf G}^+=\mathbf V{\mathbf\Sigma}^+\mathbf U^{T},\label{eq:xy13}
\end{eqnarray}
in which ${\mathbf\Sigma}^+$ is obtained by taking the reciprocal of each non-zero element on the diagonal of the diagonal matrix ${\mathbf\Sigma}$, leaving the zeros in place. One can easily check that the Moore-Penrose pseudoinverse $\mathbf{G}^+$ obtained by SVD satisfies the g-inverse matrix definition and hence is a valid g-inverse.

With the concept of g-inverse, we are ready to state our main result regarding the first open question.

\begin{thm}  \label{thm:2}
Let $\mathbb A^g$ be any given g-inverse of $\mathbb A$ (e.g., it can be chosen as the Moore-Penrose pseudoinverse $\mathbb A^+$), and $h^*$ be obtained by the following LP
\begin{eqnarray}
&& h^*=\min\limits_{{\mathbf{t}}}\{\mathbf{t}^T\mathbb A^g \textbf c\}, \label{eq:xy15}\\
\text{s. t. }&& \mathbf{t}\succeq \mathbf 0,\nonumber\\
&& (\mathbf I-\mathbb A^g\mathbb A)^T\mathbf{t}=\mathbf 0.\nonumber
\end{eqnarray}
Then $\text{Sim}_{Y}(Z\rightarrow X)$ holds, \textbf{if and only if} $h^*=0$ and~\eqref{eq:xy9} holds.
\end{thm}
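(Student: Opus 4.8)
\emph{Proof proposal.} The plan is to assemble the chain of equivalences foreshadowed in the text. By the reduction culminating in \eqref{eq:xy8}, $\text{Sim}_Y(Z\rightarrow X)$ holds if and only if the linear system $\mathbb A\mathbf q=\mathbf c$ admits a \emph{nonnegative} solution $\mathbf q\succeq\mathbf 0$ (the constraints \eqref{eq:xy2} and \eqref{eq:xy3} are already encoded, and $q_{kj}\le 1$ follows automatically). I would first dispose of consistency: if \eqref{eq:xy9} fails, $\mathbb A\mathbf q=\mathbf c$ has no solution at all, so simulatability fails regardless of $h^*$ — this is exactly why \eqref{eq:xy9} must appear as a separate hypothesis. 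Hence the substantive claim to prove is: \emph{assuming \eqref{eq:xy9} holds, a nonnegative solution exists if and only if $h^*=0$.}

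Next I would parametrize the solution set. Invoking the classical g-inverse fact \cite{Rao:NYW:1971} that when $\mathbb A\mathbf q=\mathbf c$ is consistent its full solution set is $\{\mathbb A^g\mathbf c+(\mathbf I-\mathbb A^g\mathbb A)\mathbf w:\mathbf w\in\mathbb R^n\}$ for \emph{any} fixed g-inverse $\mathbb A^g$ (not only $\mathbb A^+$), a nonnegative solution exists if and only if the inequality system $\mathbb A^g\mathbf c+(\mathbf I-\mathbb A^g\mathbb A)\mathbf w\succeq\mathbf 0$ in the free variable $\mathbf w$ is solvable. Writing $\mathbf B\triangleq\mathbf I-\mathbb A^g\mathbb A$ and $\mathbf d\triangleq\mathbb A^g\mathbf c$, this is $-\mathbf B\mathbf w\preceq\mathbf d$.

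Then I would apply Farkas' lemma in its inhomogeneous (theorem-of-alternatives) form \cite{Alex:san:1998}: the system $-\mathbf B\mathbf w\preceq\mathbf d$ has a solution $\mathbf w$ if and only if every $\mathbf t\succeq\mathbf 0$ with $\mathbf B^T\mathbf t=\mathbf 0$ satisfies $\mathbf d^T\mathbf t\ge 0$ — that is, $(\mathbf I-\mathbb A^g\mathbb A)^T\mathbf t=\mathbf 0$, $\mathbf t\succeq\mathbf 0$ imply $\mathbf t^T\mathbb A^g\mathbf c\ge 0$. Since $\mathbf t=\mathbf 0$ is always feasible for \eqref{eq:xy15}, $h^*\le 0$ unconditionally, so ``$\mathbf t^T\mathbb A^g\mathbf c\ge 0$ for every feasible $\mathbf t$'' is exactly $h^*=0$. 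Chaining: $\text{Sim}_Y(Z\rightarrow X)$ holds $\iff$ [\eqref{eq:xy9} holds] and [$-\mathbf B\mathbf w\preceq\mathbf d$ solvable] $\iff$ [\eqref{eq:xy9} holds] and [$h^*=0$], which is the assertion.

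The main obstacle is bookkeeping rather than depth. I must thread the consistency hypothesis correctly: without \eqref{eq:xy9} the parametrization of the solution set is invalid, and $h^*=0$ alone would only produce a $\mathbf w$ with $\mathbb A^g\mathbf c+(\mathbf I-\mathbb A^g\mathbb A)\mathbf w\succeq\mathbf 0$, which need not solve $\mathbb A\mathbf q=\mathbf c$. I must also state Farkas' lemma in precisely the transposed form that yields the constraint $(\mathbf I-\mathbb A^g\mathbb A)^T\mathbf t=\mathbf 0$ and objective $\mathbf t^T\mathbb A^g\mathbf c$ of \eqref{eq:xy15}, and record that both directions (a nonnegative $\mathbf q$ actually lying in the parametrized family, and conversely) use consistency, while nothing depends on which g-inverse is chosen.
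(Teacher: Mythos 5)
Your proposal is correct and follows essentially the same route as the paper's own proof: reduce to the existence of a nonnegative solution of $\mathbb A\mathbf q=\mathbf c$, handle consistency via \eqref{eq:xy9}, parametrize the solution set with a g-inverse, and convert solvability of the resulting inequality system into $h^*=0$ via Farkas' lemma (your sign convention $(\mathbf I-\mathbb A^g\mathbb A)\mathbf w$ versus the paper's $(\mathbb A^g\mathbb A-\mathbf I)\mathbf p$ is immaterial since the parameter ranges over all of $\mathbb R^n$). Your explicit remark that the parametrization is only valid under consistency, and that the argument is independent of the choice of g-inverse, is if anything slightly more careful than the paper's write-up.
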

\begin{proof}
If~\eqref{eq:xy9} does not hold, then there is no solution to~\eqref{eq:xy8}, and hence $\text{Sim}_{Y}(Z\rightarrow X)$ does not hold.

In the remainder of the proof, we assume that~\eqref{eq:xy9} holds. If~\eqref{eq:xy9} holds, the general solution to~\eqref{eq:xy8} can be written in the following form (see, e.g., Theorem 2 a.(d) of~\cite{Rao:san:1967})
\begin{eqnarray}
\mathbf q=\mathbb A^g \mathbf c+(\mathbb A^g\mathbb A-\mathbf I) \mathbf p, \label{eq:xy11}
\end{eqnarray}
in which $\mathbb A^g$ can be any given g-inverse of $\mathbb A$, and $\mathbf p$ is an arbitrary length-$n$ vector.

As the result, the problem of whether there exists a nonnegative solution to~\eqref{eq:xy8} (i.e., ${\mathbf q}\succeq {\mathbf 0}$) is equivalent to the problem of whether there exists a solution $\mathbf p$ for the following system defined by
\begin{eqnarray}
(\mathbf I-\mathbb A^g\mathbb A)\mathbf p\preceq \mathbb A^g \mathbf c  \label{eq:xy12}.
\end{eqnarray}
To check whether the system defined by \eqref{eq:xy12} has a solution, we use Farkas' lemma, a fundamental lemma in linear programming and related area in optimization. For completeness, we state the form of Farkas' lemma used in our proof in Appendix~\ref{app:farkas}. To use Farkas' lemma, we first write a LP related to the system defined in \eqref{eq:xy12}
\begin{eqnarray}
&& h^*=\min\limits_{\mathbf t}\{\mathbf t^T\mathbb A^g\mathbf c\}, \nonumber\\
\text{s.t. }&& \mathbf t\succeq \mathbf 0,\nonumber \\
&&(\mathbf I-\mathbb A^g\mathbb A)^T\mathbf t=\textbf 0. \nonumber
\end{eqnarray}

The above LP is always feasible since $\mathbf t =\mathbf 0$ is a vector that satisfies the constraints, which results in $\mathbf t^T\mathbb A^g\mathbf c=0$. Hence the optimal value $h^*\leq 0$. Using Farkas' lemma, we have that~\eqref{eq:xy12} has a solution \textbf{if and if} $h^*=0$. More specifically, if $h^*=0$, then there exists at least a solution $\mathbf p$ for~\eqref{eq:xy12}, which further implies that there is a nonnegative solution to~\eqref{eq:xy8}, and hence $\text{Sim}_{Y}(Z\rightarrow X)$ holds.
On the other hand, if $h^*<0$, then there is no solution $\mathbf p$ for~\eqref{eq:xy12}, which further implies that there is no nonnegative solution to~\eqref{eq:xy8}, and hence $\text{Sim}_{Y}(Z\rightarrow X)$ does not hold.
\end{proof}

As mentioned above, if $\text{Rank}(\mathbb A)=m=n$ holds, then $\mathbb A^g=\mathbb A^{-1}$ is unique. For other cases, $\mathbb A^g$ might not be unique. One may wonder whether different choices of $\mathbb A^g$ will affect the result in Theorem~\ref{thm:2} or not. The following proposition answers this question.
\begin{prop}\label{prop:1}
Different choices of $\mathbb A^g$ will not affect the result on whether $h^*$ equals 0 or not.
\end{prop}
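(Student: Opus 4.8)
The plan is to show that the feasibility question ``is there a nonnegative solution to $\mathbb A\mathbf q=\mathbf c$?'' --- which by the argument in the proof of Theorem~\ref{thm:2} is what $h^*=0$ encodes --- does not depend on which g-inverse we pick, and therefore neither does the truth value of $h^*=0$. The cleanest route is to bypass the LP entirely and argue at the level of the solution set of~\eqref{eq:xy8}. Assuming~\eqref{eq:xy9} holds (otherwise $h^*<0$ for every choice and there is nothing to prove), the key observation is that the affine solution set $\{\mathbf q:\mathbb A\mathbf q=\mathbf c\}$ is an intrinsic object: it equals $\mathbf q_0+\mathrm{Null}(\mathbb A)$ for any particular solution $\mathbf q_0$, independent of any g-inverse. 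The parametrization~\eqref{eq:xy11} produced by a given $\mathbb A^g$ is merely one way of writing this same set (one checks $\mathbb A^g\mathbf c$ is a particular solution since $\mathbb A\mathbb A^g\mathbb A=\mathbb A$ gives $\mathbb A\mathbb A^g\mathbf c=\mathbf c$ under consistency, and $(\mathbb A^g\mathbb A-\mathbf I)\mathbf p$ ranges over $\mathrm{Null}(\mathbb A)$ as $\mathbf p$ varies). Hence the existence of a nonnegative point in this set --- i.e.\ whether the polyhedron $\{\mathbf q\succeq\mathbf 0:\mathbb A\mathbf q=\mathbf c\}$ is nonempty --- is a property of $\mathbb A$ and $\mathbf c$ alone.

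First I would fix two arbitrary g-inverses $\mathbb A^{g_1}$ and $\mathbb A^{g_2}$ and the corresponding LPs~\eqref{eq:xy15}, with optimal values $h_1^*$ and $h_2^*$. Second, I would invoke the chain of equivalences already established inside the proof of Theorem~\ref{thm:2}: for each $i$, $h_i^*=0$ if and only if the system $(\mathbf I-\mathbb A^{g_i}\mathbb A)\mathbf p\preceq\mathbb A^{g_i}\mathbf c$ has a solution, if and only if $\mathbb A\mathbf q=\mathbf c$ has a nonnegative solution. Third --- and this is the only substantive step --- I would note that the rightmost condition in that chain makes no reference to $g_i$, so it holds for $i=1$ exactly when it holds for $i=2$; chaining back, $h_1^*=0 \iff h_2^*=0$. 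Combined with the fact (also noted in the proof) that $h_i^*\le 0$ always, this gives $h_1^*=0 \iff h_2^*=0$, which is the claim.

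I do not expect a genuine obstacle here, since essentially all the work was done in Theorem~\ref{thm:2}; the proposition is really a corollary recording that the pivotal condition (nonnegative solvability of~\eqref{eq:xy8}) is g-inverse-free. The one place to be careful is the consistency case split: when~\eqref{eq:xy9} fails, the proof of Theorem~\ref{thm:2} concludes $\mathrm{Sim}_Y(Z\to X)$ fails but does not directly assert $h^*<0$; I would either treat this case separately (observing the LP is then vacuously unaffected by the choice of $\mathbb A^g$ because the conclusion is drawn from~\eqref{eq:xy9} alone, before $h^*$ is even consulted) or simply restrict the proposition's content to the consistent case, which is the only case in which $h^*$ is used to decide simulatability. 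Either way the argument is short and the emphasis should be on the g-inverse independence of the solution set of~\eqref{eq:xy8}.
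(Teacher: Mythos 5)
Your argument is correct, and it reaches the conclusion by a cleaner route than the paper does. You observe that, for any fixed g-inverse, the chain of equivalences already proved inside Theorem~\ref{thm:2} reads $h_i^*=0 \iff \{(\mathbf I-\mathbb A^{g_i}\mathbb A)\mathbf p\preceq \mathbb A^{g_i}\mathbf c \text{ is solvable}\} \iff \{\mathbb A\mathbf q=\mathbf c \text{ has a nonnegative solution}\}$, and the last condition is manifestly independent of $g_i$; this makes the proposition an immediate corollary. The paper instead works entirely at the level of the two inequality systems: assuming a feasible $\mathbf p_1$ for the system built from $\mathbb A_1^g$, it explicitly manufactures a feasible $\mathbf p_2$ for the system built from $\mathbb A_2^g$, using the facts that $\mathbb A_2^g\mathbf c-\mathbb A_1^g\mathbf c$ lies in $\mathrm{Null}(\mathbb A)$ and that every element of $\mathrm{Null}(\mathbb A)$ is of the form $(\mathbf I-\mathbb A_2^g\mathbb A)\mathbf p$. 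The underlying reason both work is the same --- the two parametrizations in~\eqref{eq:xy11} describe one and the same affine solution set of~\eqref{eq:xy8} --- but your version makes that invariance the explicit pivot, whereas the paper's constructive version has the minor advantage of exhibiting $\mathbf p_2$ concretely (useful if one actually wants a certificate) at the cost of re-deriving, in effect, one direction of an equivalence already available. One point in your favor: both proofs silently need consistency~\eqref{eq:xy9} (the paper's claim that $\mathbb A_1^g\mathbf c$ and $\mathbb A_2^g\mathbf c$ solve $\mathbb A\mathbf q=\mathbf c$ is false for an inconsistent system, since $\mathbb A\mathbb A^g\mathbf c=\mathbf c$ only when $\mathbf c$ is in the column space of $\mathbb A$), and you are the one who flags this case split explicitly; your resolution --- that when~\eqref{eq:xy9} fails the value of $h^*$ is never consulted by the algorithm, so the proposition only needs to cover the consistent case --- is the right way to close that small gap.
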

\begin{proof}
Let $\mathbb A_1^g$ and $\mathbb A_2^g$ be two different g-inverses of $\mathbb A$, and let $h_1^*$ and $h_2^*$ be the values obtained using $\mathbb A_1^g$ and $\mathbb A_2^g$ in~\eqref{eq:xy15} respectively. It suffices to show that if $h_1^*=0$, then $h_2^*=0$.

Assuming that $h_1^*=0$, then there exists a vector $\mathbf p_1$ satisfying $(\mathbf I-\mathbb A_1^g\mathbb A)\mathbf p_1\preceq \mathbb A_1^g \textbf c$, we will show that there exists a vector $\mathbf p_2$ satisfying $ (\mathbf I-\mathbb A_2^g\mathbb A)\mathbf p_2\preceq \mathbb A_2^g \mathbf c$, which then implies $h_2^*=0$.

First, we know that $\mathbb A_1^g\mathbf c$ and $\mathbb A_2^g\mathbf c$ are two solutions to the system $\mathbb A\mathbf q=\mathbf c$, which can be easily verified by setting $\mathbb A^g$ as $\mathbb A_1^g$ and $\mathbb A_2^g$ in~\eqref{eq:xy11} respectively and setting $\mathbf p=\mathbf 0$. This implies that \begin{eqnarray}
\mathbb A(\mathbb A_2^g\mathbf c-\mathbb A_1^g\mathbf c)=\mathbf 0,
\end{eqnarray}
and hence $\mathbb A_2^g\mathbf c-\mathbb A_1^g\mathbf c$ is a solution to the system $\mathbb A\mathbf q=\mathbf 0$.

Second, we know that any solution to the system $\mathbb A\mathbf q=\mathbf 0$ can be written in the form $(\mathbf I-\mathbb A^g\mathbb A) \mathbf p$~\cite{Rao:san:1967}. As $\mathbb A_2^g\mathbf c-\mathbb A_1^g\mathbf c$ is a solution to system $\mathbb A\mathbf q=\mathbf 0$, there must exist a $\mathbf p_0$ such that
 \begin{eqnarray}
(\mathbf I-\mathbb A_2^g\mathbb A)\mathbf p_0= \mathbb A_2^g\mathbf c-\mathbb A_1^g\mathbf c.\label{eq:24}
  \end{eqnarray}
In addition, it is easy to check that $(\mathbf I-\mathbb A_1^g\mathbb A)\mathbf p_1+(\mathbf I-\mathbb A_2^g\mathbb A)\mathbf p_0$ is also a solution to the system $\mathbb A\mathbf q=\mathbf 0$. Thus, there exists a $\mathbf p_2$ such that
 \begin{eqnarray}
(\mathbf I-\mathbb A_2^g\mathbb A)\mathbf p_2= (\mathbf I-\mathbb A_1^g\mathbb A)\mathbf p_1+(\mathbf I-\mathbb A_2^g\mathbb A)\mathbf p_0.\label{eq:25}
  \end{eqnarray}

Plugging~\eqref{eq:24} into~\eqref{eq:25}, we have
\begin{eqnarray}
(\mathbf I-\mathbb A_2^g\mathbb A)\mathbf p_2&=& (\mathbf I-\mathbb A_1^g\mathbb A)\mathbf p_1+(\mathbf I-\mathbb A_2^g\mathbb A)\mathbf p_0\nonumber\\
&=&(\mathbf I-\mathbb A_1^g\mathbb A)\mathbf p_1+\mathbb A_2^g\mathbf c-\mathbb A_1^g\mathbf c\\
&\preceq & \mathbb A_2^g\mathbf c,
\end{eqnarray}
in which the last inequality comes from the assumption that $(\mathbf I-\mathbb A_1^g\mathbb A)\mathbf p_1\preceq \mathbb A_1^g \mathbf c$. Hence, we have found a $\mathbf p_2$, such that $(\mathbf I-\mathbb A_2^g\mathbb A)\mathbf p_2\preceq \mathbb A_2^g \mathbf c$. This implies that $h_2^*=0$.

\end{proof}
\begin{rmk} \label{lem:1}
The proposed algorithm for checking whether $\text{Sim}_{Y}(Z\rightarrow X)$ holds or not has a polynomial complexity. Among all operations required, computing the g-inverse and solving the LP defined by \eqref{eq:xy15} require most computations. The complexity to obtain $\mathbb A^g$ is of order $O(n^3)$~\cite{Moller:exact:1999}.
Furthermore, there exists polynomial complexity algorithms to solve the LP defined by \eqref{eq:xy15}. For example, \cite{Gonzaga:PMP:05} provided an algorithm to solve LP using $O(n^3L)$ operations, where $L$ is number of binary bits needed to store input data of the problem (one can refer to Chapter 8 in~\cite{Bazaraa:JWS:2011} for more details about the complexity of algorithms for solving LP). Hence, the total operations of our algorithm for checking $\text{Sim}_{Y}(Z\rightarrow X)$ is of order $O(n^3L)$. In addition, we note that we can terminate the LP algorithm earlier once the algorithm finds a $\mathbf t$ such that ${\mathbf t}{\mathbb A}^g{\mathbf c}<0$, as this indicates that $h^*<0$. This can potentially further reduce the computational complexity.
\end{rmk}

Thus, we can conclude that the proposed algorithm can check whether $\text{Sim}_{Y}(Z\rightarrow X)$ holds or not with a polynomial complexity. Algorithm~\ref{alg:1} summarizes the main steps involved in our algorithm. In the following algorithm, we use $\text{Res}=0$ to denote that $\text{Sim}_{Y}(Z\rightarrow X)$ does not hold and $\text{Res}=1$ to denote that $\text{Sim}_{Y}(Z\rightarrow X)$ holds.

\alglanguage{pseudocode}
\begin{algorithm}[h]
\caption{Checking $\text{Sim}_{Y}(Z\rightarrow X)$}\label{alg:1}
\begin{algorithmic}[1]
\State \textbf{Input: }PMF $P_{XYZ}$;
\newline
\State \textbf{Initiate: }
\State \hspace{5mm}a. Calculate matrices $\mathbf{A}$ and $\mathbf{C}$;
\State \hspace{5mm}b. Construct $\mathbf c$ and $\mathbb A$ using~\eqref{eq:c} and~\eqref{eq:A} respectively;
\State \hspace{5mm}c. Set $\text{Res}=0$;
\newline
\If {($\text{Rank}(\mathbb A)\neq \text{Rank}(\mathbb A|\mathbf c)$)}
\State \textbf{break;}
\Else
\State d. Find a $\mathbb A^g$, and calculate $\mathbb A^g\mathbf c$, $\mathbf I-\mathbb A^g\mathbb A$;
\State e. Solve LP~\eqref{eq:xy15} and obtain $h^*$;
    \If {($h^*==0$)}
    \State $\text{Res}=1$;
    \Else
    \State \textbf{break;}
    \EndIf
\EndIf
\newline
\State \textbf{Output: } Res.
\end{algorithmic}
\end{algorithm}

In the following, we provide our answer to the second open question, i.e., if $\text{Sim}_Y(Z\rightarrow X)$ holds, how to find $P_{\bar{X}|Z}$ efficiently.
\begin{thm}
Let $\mathbf e$ be any $n\times 1$ vector with $\mathbf e\succ \textbf 0$, and $\mathbf q^*$ be the obtained from the following LP:
\begin{eqnarray}
&& \min\limits_{\mathbf q}f(\mathbf q)= \mathbf e^T\mathbf q, \label{eq:xy16}\\
\text{s.t. }&& \mathbf q\succeq \mathbf 0,\nonumber \\
&& \mathbb A\mathbf q=\mathbf c.\nonumber
\end{eqnarray}
If $\text{Sim}_Y(Z\rightarrow X)$ holds, then $\mathbf Q^*=\text{Reshape}(\mathbf q^*,[|\mathcal X|,|\mathcal Z|])^T$ is a valid choice for $P_{\bar X|Z}$.
\end{thm}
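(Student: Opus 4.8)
The plan is to show that the LP in \eqref{eq:xy16} is feasible precisely when $\text{Sim}_Y(Z\rightarrow X)$ holds, and that \emph{any} feasible point — in particular the minimizer $\mathbf q^*$ — corresponds, after reshaping, to a legitimate conditional PMF $P_{\bar X|Z}$. First I would recall from the development preceding Theorem~\ref{thm:2} that, by construction of $\mathbb A$ and $\mathbf c$ in \eqref{eq:A} and \eqref{eq:c}, the constraint $\mathbb A\mathbf q=\mathbf c$ together with $\mathbf q\succeq\mathbf 0$ is \emph{exactly} the system \eqref{eq:xy8} plus the nonnegativity \eqref{eq:xy2}, which in turn encodes \eqref{eq:xy1}, \eqref{eq:xy2} and \eqref{eq:xy3} simultaneously. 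Thus the feasible region of \eqref{eq:xy16} is nonempty if and only if there is a transition matrix $\mathbf Q$ satisfying all three, i.e. if and only if $\text{Sim}_Y(Z\rightarrow X)$ holds. So under the hypothesis of the theorem the LP is feasible; since $f(\mathbf q)=\mathbf e^T\mathbf q\geq 0$ on the feasible set, the LP is also bounded below, hence a minimizer $\mathbf q^*$ exists.

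Next I would unpack what it means for $\mathbf q^*$ to be feasible. Writing $\mathbf q^*=\text{Vec}((\mathbf Q^*)^T)$ with $\mathbf Q^*=\text{Reshape}(\mathbf q^*,[|\mathcal X|,|\mathcal Z|])^T$, the constraint $\mathbf q^*\succeq\mathbf 0$ gives $q^*_{kj}\geq 0$ for all $k,j$, which is \eqref{eq:xy2}. The bottom block $\mathbf I_{|\mathcal Z|}\otimes\mathbf 1$ of $\mathbb A$ acting on $\mathbf q^*$ and equated to the bottom block $\mathbf 1_{|\mathcal Z|\times 1}$ of $\mathbf c$ yields $\sum_{j=1}^{|\mathcal X|}q^*_{kj}=1$ for every $k$, which is \eqref{eq:xy3}; together with nonnegativity this makes each row of $\mathbf Q^*$ a probability vector, so $\mathbf Q^*$ is a valid transition matrix and defines a conditional PMF via $P_{\bar X|Z}(j|k)=q^*_{kj}$. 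Finally, the top block $\mathbf A\otimes\mathbf I$ acting on $\mathbf q^*$, equated to $\text{Vec}(\mathbf C^T)$, is the vectorized form of $\mathbf C=\mathbf A\mathbf Q^*$, i.e. \eqref{eq:xy1}, which is exactly the requirement $P_{Y\bar X}=P_{YX}$ in \eqref{eq:pxy}. Hence $\mathbf Q^*$ witnesses $\text{Sim}_Y(Z\rightarrow X)$ and is a valid choice for $P_{\bar X|Z}$.

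The steps, in order: (i) identify the feasible set of \eqref{eq:xy16} with the set of valid simulating transition matrices, using the equivalences already established in Section~\ref{sec:MR}; (ii) conclude feasibility and boundedness of the LP under the hypothesis, hence existence of $\mathbf q^*$; (iii) read off \eqref{eq:xy2}, \eqref{eq:xy3}, \eqref{eq:xy1} from the three blocks of $\mathbb A\mathbf q^*=\mathbf c$ and $\mathbf q^*\succeq\mathbf 0$; (iv) translate back through $\text{Reshape}/\text{Vec}$ to state that $\mathbf Q^*$ gives $P_{\bar X|Z}$. I expect no serious obstacle here — the theorem is essentially a bookkeeping consequence of the reformulation leading up to \eqref{eq:xy8}. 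The one point deserving care is the transpose/reshape bookkeeping: one must check that $\text{Reshape}(\mathbf q^*,[|\mathcal X|,|\mathcal Z|])^T$ correctly inverts $\mathbf q^*=\text{Vec}((\mathbf Q^*)^T)$ so that the entry indexing $q^*_{kj}=P_{\bar X|Z}(j|k)$ lines up with the blocks of $\mathbb A$ as written in \eqref{eq:xy7}; a second minor point is to note that the choice $\mathbf e\succ\mathbf 0$ (rather than merely $\mathbf e\succeq\mathbf 0$) is irrelevant to validity of $\mathbf q^*$ and only serves, e.g., to make the objective coercive on the feasible polytope or to encode a design preference, so it need not be dwelt on in the proof.
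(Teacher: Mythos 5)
Your proposal is correct and follows essentially the same route as the paper's own proof: feasibility of the LP follows from the hypothesis that $\text{Sim}_Y(Z\rightarrow X)$ holds, and any feasible point (hence the minimizer $\mathbf q^*$) satisfies $\mathbf q^*\succeq\mathbf 0$ and $\mathbb A\mathbf q^*=\mathbf c$, which after reshaping is exactly a valid transition matrix realizing \eqref{eq:pxy}. Your added remarks on boundedness of the objective and the block-by-block unpacking of $\mathbb A\mathbf q^*=\mathbf c$ are fine elaborations of details the paper leaves implicit.
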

\begin{proof}
By assumption, $\text {Sim}_Y(Z\rightarrow X)$ holds, which implies that the system defined by \eqref{eq:xy8} is consistent and it has nonnegative solutions. Hence, the following LP is feasible
\begin{eqnarray}
&& \min\limits_{\mathbf q}f(\mathbf q)= \mathbf e^T\mathbf q,\\
\text{s.t. }&& \mathbf q\succeq \mathbf 0,\nonumber \\
&& \mathbb A\mathbf q=\mathbf c,\nonumber
\end{eqnarray}
where $\mathbf e\succ \mathbf 0$. Hence, the minimizer $\mathbf q^*$ is nonnegative and satisfies $\mathbb A\mathbf q^*=\mathbf c$. We can then reshape $\mathbf q^*$ into matrix $\mathbf Q^*$ (see~\eqref{eq:q}). $\mathbf Q^*$ is a valid choice for $P_{\bar{X}|Z}$.
\end{proof}
\begin{rmk}
Since finding a suitable $P_{\bar X|Z}$ using our approach is equivalent to solving a LP, the complexity is of polynomial order.
\end{rmk}

\begin{rmk}
For a given distribution $P_{XYZ}$, there may be more than one possible $P_{\bar X|Z}$ such that~\eqref{eq:pxy} holds. Different choices of $\textbf e$ in~\eqref{eq:xy16} give different values for $P_{\bar X|Z}$. 
\end{rmk}

\begin{rmk}
The objective function $f(\mathbf q)$ can be further modified to satisfy various design criteria of Eve. For example, let $$\tilde{\mathbf q}=\text{Vec}(\tilde{\mathbf Q}[\tilde{q}_{kj}]^T)$$ with $\tilde{q}_{kj}=P_{X|Z}(k|j)$, then setting $$f(\mathbf q)=||\mathbf q-\tilde{\mathbf q}||_2^2$$ will minimize the amount of changes in the conditional PMF in the $l_2$ norm sense. This is a quadratic programming, which can still be solved efficiently.
\end{rmk}

\section{Numerical Examples}  \label{sec:eple}

In this section, we provide several examples to illustrate the proposed algorithm. We also use some of the examples used in \cite{Maurer:TIT:031} to compare our proposed algorithm with the method in \cite{Maurer:TIT:031}.

\textbf{Example 1:} Let $P_{XYZ}$ with ranges $\mathcal{X}=\{x_1, x_2\}$, $\mathcal{Y}=\{y_1, y_2\}$ and $\mathcal{Z}=\{z_1, z_2, z_3\}$ be:
\begin{eqnarray}
&&P_{XYZ}(x_1,y_1,z_1)=6/100,\nonumber \\
&&P_{XYZ}(x_2,y_1,z_1)=4/100,\nonumber \\
&&P_{XYZ}(x_1,y_1,z_2)=9/100,\nonumber \\
&&P_{XYZ}(x_2,y_1,z_2)=6/100,\nonumber \\
&&P_{XYZ}(x_1,y_1,z_3)=15/100,\nonumber \\
&&P_{XYZ}(x_2,y_1,z_3)=10/100,\nonumber \\
&&P_{XYZ}(x_1,y_2,z_1)=36/100,\nonumber \\
&&P_{XYZ}(x_2,y_2,z_1)=4/100,\nonumber \\
&&P_{XYZ}(x_1,y_2,z_2)=9/100,\nonumber \\
&&P_{XYZ}(x_2,y_2,z_2)=1/100,\nonumber \\
&&P_{XYZ}(x_1,y_2,z_3)=0,\nonumber \\
&&P_{XYZ}(x_2,y_2,z_3)=0.\nonumber
\end{eqnarray}

To use our algorithm, we have the following steps:\\
\textit{Step 1:} Compute $P_{YZ}$ and $P_{YX}$, and write them in the matrix form $\mathbf A$ and $\mathbf C$:
\begin{eqnarray}
\textbf A=\left[
    \begin{array}{ccc}
      0.1 & 0.15 & 0.25 \\
      0.4 & 0.1 & 0 \\
    \end{array}
  \right], \textbf C=\left[
               \begin{array}{cc}
                 0.3 & 0.2 \\
                 0.45 & 0.05 \\
               \end{array}
             \right].
\end{eqnarray}
\textit{Step 2:} Construct $\mathbb A$ and $\mathbf c$ using~\eqref{eq:c} and~\eqref{eq:A} respectively:
\begin{eqnarray}
&&\mathbb A=\left[
         \begin{array}{cccccc}
           0.1 & 0 & 0.15 & 0 & 0.25 & 0 \\
           0 & 0.1 & 0 & 0.15 & 0 & 0.25 \\
           0.4 & 0 & 0.1 & 0 & 0 & 0 \\
           0 & 0.4 & 0 & 0.1 & 0 & 0 \\
           1 & 1 & 0 & 0 & 0 & 0 \\
           0 & 0 & 1 & 1 & 0 & 0 \\
           0 & 0 & 0 & 0 & 1 & 1 \\
         \end{array}
       \right],\\
&&\mathbf c=[0.3, 0.2, 0.45, 0.05, 1, 1, 1]^T.
\end{eqnarray}
\textit{Step 3:} Check the ranks of $\mathbb A \text{ and }(\mathbb A|\mathbf c)$:

We get
\begin{eqnarray}
\text{Rank}(\mathbb A)=\text{Rank}((\mathbb A|\mathbf c))=5.
\end{eqnarray}
\textit{Step 4:} Choose the g-inverse to be the Moore-Penrose pseudoinverse $\mathbb A^+$ and calculate $\mathbb A^+\mathbf c$ and $\mathbf I-\mathbb A^+\mathbb A$:
\begin{eqnarray}
\mathbb A^+\mathbf c=\left[
         \begin{array}{c}
           0.9762 \\
           0.0238 \\
           0.5952 \\
           0.4048 \\
           0.4524 \\
           0.5476 \\
         \end{array}
       \right],
\end{eqnarray}
\begin{eqnarray}
  &&\hspace{-10mm}\mathbf I-\mathbb A^+\mathbb A=\nonumber \\&&\hspace{-10mm}\footnotesize{\left[
\begin{array}{cccccc}
                   \!\!0.0238\!\! &\!\! -0.0238\!\! &\!\! -0.0952\!\! &\!\! 0.0952 \!\!&\!\! 0.0476\!\! &\!\! -0.0476\!\! \\
                   \!\!-0.0238\! \!&\!\!0.0238\! \!&\!\! 0.0952\! \!&\! \!-0.0952\! \!&\! \!-0.0476 \!\!&\! \!0.0476\!\! \\
                   \!\!-0.0952 \!\!& \!\!0.0952 \!\!&\!\!0.3810 \! \!& \!\!-0.3810  \!\!&\!\! -0.1905 \!\!& \!\!0.1905\!\! \\
                   \!\!0.0952 \!\! &\! \! -0.0952 \!\! &\!\!  -0.3810 \!\!& \!\! 0.3810 \! \!&\! \!0.1905 \! \!&\! \! -0.1905\! \!\\
                  \! \!0.0476 \!\! & \! \!-0.0476\!\! & \! \! -0.1905\!\!& \!\!  -0.1905\! \!& \!\! 0.0952\! \!&  \!\! -0.0952 \!\!\\
                   \!\!-0.0476\! \! &\! \!0.0476\! \!& \! \!0.1905\! \! & \!\! -0.1905\! \!& \! \! -0.0952\!\! &\!  \!0.0952 \!\!\\
                 \end{array}
               \right].}
\end{eqnarray}
\textit{Step 5:} Solve LP~\eqref{eq:xy15}. Using the above data, we obtain $h^*=0$, which implies that $\text{Sim}_Y(Z\rightarrow X)$ holds.\\
\textit{Step 6:} Obtain a possible $P_{\bar X|Z}$.  We construct the LP defined in~\eqref{eq:xy16} with $\mathbf e=[2, 2 ,2 ,1, 1, 1]^T$, and get
$$\mathbf q^*=[1 ,0 ,1/2, 1/2 ,1/2 ,1/2]^T.$$ Thus the simulatability channel is
\begin{eqnarray}
P_{\bar X|Z}=\left[
              \begin{array}{cc}
                1 & 0 \\
                1/2 & 1/2 \\
                1/2 & 1/2 \\
              \end{array}
            \right],
\end{eqnarray}
which is consistent with the result obtained from the criterion proposed in \cite{Maurer:TIT:031}. If we set $\mathbf e=[1,1,1,1,1,1]^T$, we get $$\mathbf q^*=[0.9762 ,0.0238 ,0.5952, 0.4048 ,0.4524 ,0.5476]^T,$$ which implies that another valid choice is
\begin{eqnarray}
P_{\bar X|Z}=\left[
              \begin{array}{cc}
                0.9762 & 0.0238 \\
                0.5962 & 0.4048 \\
                0.4524 & 0.5476 \\
              \end{array}
            \right]. \label{eq:xz}
\end{eqnarray}

\textbf{Example 2:} In this example, we consider a case in which $Y$ is not binary. To represent the joint PMF concisely, we follow the same approach in~\cite{Maurer:TIT:031} and use $$M_{UV}=(P_U(u),(P_{V|U=u}(v_1),\cdots,P_{V|U=u}(v_{|\mathcal V|-1})))_{u\in \mathcal U}$$ to represent the joint PMF $P_{UV}$. For this example, we set
\begin{eqnarray}
M_{ZY}&=&(0.3,(0,0)),(0.3,(0.5,0)),\nonumber \\
&&(0.3,(0.25,\sqrt{3}/4)),(0.1,(0.25,\sqrt{3}/12)),\nonumber\\
M_{XY}&=&(0.3,(0.25,0)),(0.3,(0.375,\sqrt 3/8)),\nonumber \\&&(0.3,(0.125,\sqrt 3/8))(0.05,(0.24,\sqrt 3/12)) \nonumber \\&&(0.05,(0.26,\sqrt 3/12)).
\end{eqnarray}

In step 1, we write $P_{YZ}$ and $P_{YX}$ in the matrix form $\mathbf A$ and $\mathbf C$:
\begin{eqnarray}
&&\hspace{-10mm}\mathbf A=\left[
    \begin{array}{cccc}
            0  &  0.1500&    0.0750&    0.0250\\
         0      &   0    &0.1299   & 0.0144\\
    0.3000    &0.1500    &0.0951   & 0.0606\\
    \end{array}
  \right],\nonumber\\
&&\hspace{-10mm}\mathbf C=\left[
  \begin{array}{ccccc}
 0.0750  &  0.1125 &   0.0375 &   0.0120  &  0.0130\\
         0  &  0.0650 &   0.0650&    0.0072 &   0.0072\\
    0.2250  &  0.1225 &   0.1975 &   0.0308 &   0.0298\\
  \end{array}
\right].\nonumber
\end{eqnarray}

To make the paper concise, we do not list the values of $\mathbb A$, $\mathbf c$ and following steps in details. Steps $2,3,4$ are similar to those in Example $1$. But in Step $5$, we obtain that $h^*<0$, which indicates that $\text{Sim}_Y(Z\rightarrow X)$ does not hold. This result is also consistent with the conclusion in~\cite{Maurer:TIT:031}, which is obtained by an analysis that exploits the special mass constellation structure of the data. We note that the mechanical model based ``more centered'' criterion in \cite{Maurer:TIT:031} does not work for this example, as $Y$ is not binary anymore, although the mass constellation representation of PMFs can still be used to exploit the special structure that this set of data has.

Next, we provide an example for which the mass constellation presentation does not work while our algorithm can easily obtain the answers.

\textbf{Example 3:} In this example, we consider $X,Y,Z$ with larger dimensions, in particular, we set $|\mathcal{X}|=4$, $|\mathcal{Y}|=4$, and $|\mathcal{Z}|=6$. Again to represent the joint PMF concisely, we use the same method as that used in Example $2$ to represent $P_{XYZ}$. For this example, we randomly set

\footnotesize{
\begin{eqnarray}
&&\hspace{-10mm}\normalsize{M_{ZY}=}\nonumber\\
&&\hspace{-10mm}(0.1604,(0.1966,0.1054,0.4198)),(0.1654,(0.1230,0.4709,0.3355)),\nonumber\\
&&\hspace{-10mm}(0.1613,(0.0350,0.6219,0.0823)),(0.1504,(0.4585,0.2504,0.2343)),\nonumber\\
&&\hspace{-10mm}(0.1207,(0.2443,0.4704,0.0701)),(0.2419,(0.2979,0.1151,0.4601));\nonumber\\
&&\hspace{-10mm}\normalsize{M_{XY}=}\nonumber\\
&&\hspace{-10mm}(0.2603,(0.1784,0.3822,0.2056)),(0.2181,(0.1538,0.4409,0.2255)),\nonumber\\
&&\hspace{-10mm}(0.2356,(0.2129,0.2684,0.3913)),(0.2861,(0.3422,0.2044,0.3363)).\nonumber
\end{eqnarray}}
\normalsize{}

We denote the above PMF with following two matrices
\begin{eqnarray}
&&\hspace{-8mm}\mathbf{A}=\left[
             \begin{array}{cccccc}
                \! 0.0315 \! &  \!0.0203\! &\!   0.0056\! & \!  0.0690 \! &\!  0.0295\! & \!  0.0720\!\\
    \!0.0169\!  & \! 0.0779\! &\!   0.1003\! & \!  0.0377 \! & \! 0.0568 \! &\!  0.0278\!\\
   \! 0.0673 \! & \! 0.0555 \! &\!  0.0133\!   & \!0.0352 \!& \!  0.0085\!  & \! 0.1113\!\\
    \!0.0446\!  &\!  0.0117\! & \!  0.0421 \! & \! 0.0085\!  &\!  0.0260 \! &  \!0.0307\!\\
             \end{array}
           \right],\nonumber\\
&&\hspace{-8mm}\mathbf{C}=\left[
\begin{array}{cccc}
               0.0464 &   0.0335&    0.0502  &  0.0979\\
    0.0995   & 0.0962   & 0.0632 &   0.0585\\
    0.0535  &  0.0492  &  0.0922 &   0.0962\\
    0.0609   & 0.0392   & 0.0300  &  0.0335\\
    \end{array}
    \right].
\end{eqnarray}
 Following the same steps as those in Example 1, we obtain that $h^*=0$, which means $\text{Sim}_Y(Z\rightarrow X)$ holds. Furthermore, by setting ${\mathbf e}={\mathbf 1}_{24\times 1}$ in~\eqref{eq:xy16}, we obtain one possible $P_{\bar X|Z}$, denoted by matrix $\mathbf Q^*$:
\begin{eqnarray}
\mathbf Q^*=\left[\begin{array}{cccc}
    0.4979   & 0.1504 &   0.2038 &   0.1479\\
    0.0148   & 0.3751  &  0.5618  &  0.0483\\
    0.5210   & 0.4391  &  0.0254  &  0.0144\\
    0.1302   & 0.0917  &  0.0301  &  0.7481\\
    0.5638   & 0.2674  &  0.0161  &  0.1527\\
    0.0261   & 0.0622  &  0.4110  &  0.5006\\
\end{array}
\right].
\end{eqnarray}

One can easily check that $\mathbf A\mathbf Q^*=\mathbf C$ holds. We note that, because of the lack of special data structure and the high dimensions, it is difficult to use the mass constellation structure of~\cite{Maurer:TIT:031} to check whether $\text{Sim}_Y(Z\rightarrow X)$ holds or not in this example.

\textbf{Example 4:} In this example, we consider the following PMF $P_{XY}$:
\begin{eqnarray}
P_{XY}(x,y)=
\begin{cases}
\frac{1-\alpha}{2}, \quad \text{ if } x=y; \\
\frac{\alpha}{2}, \quad \quad \text{ if } x\not=y,
\end{cases}\nonumber
\end{eqnarray}
and $Z$ is generated by $[X,Y]$ via an erasure channel with erasure probability $1-\gamma$, i.e., $Z=(X,Y)$ with a probability $\gamma$ and $Z=\phi$ with probability $1-\gamma$. It was shown in~\cite{Maurer:TIT:031} that $\text{sim}_Y(Z\rightarrow X)$ and $\text{sim}_X(Z\rightarrow Y)$ hold if and only if $\gamma\ge 1-2\alpha$. In the following, we use our algorithm to verify the obtained result.

As above, in step 1, we compute $P_{YZ}$ and write $P_{YZ}$ and $P_{YX}$ in matrix form $\mathbf A$ and $\mathbf C$:
\begin{eqnarray}
\textbf{A}&=&\left[
             \begin{array}{ccccc}
               \frac{(1-\alpha)\gamma}{2} & \frac{\alpha\gamma}{2} &0&0& \frac{1-\gamma}{2} \\
               0 &0&\frac{\alpha\gamma}{2} & \frac{(1-\alpha)\gamma}{2}& \frac{1-\gamma}{2}\\
             \end{array}
           \right],\nonumber
          \\\nonumber
           \textbf{C}&=&\left[
                                \begin{array}{cc}
                                  \frac{1-\alpha}{2} & \frac{\alpha}{2} \\
                                  \frac{\alpha}{2} & \frac{1-\alpha}{2} \\
                                \end{array}
                              \right].
\end{eqnarray}
In step 2, we calculate matrices $\mathbb A$ and $\mathbf c$:
\begin{eqnarray}
&&\hspace{-8mm} \mathbb A=\nonumber \\&&\hspace{-10mm}{{\left[
         \begin{array}{cccccccccc}
           \!\!\frac{(1-\alpha)\gamma}{2}\!\!&\!\!0\!\! &\!\! \frac{\alpha\gamma}{2}\!\! &\!\!0\!\!&\!\!0\!\!&\!\!0\!\!&\!\!0\!\!&\!\!0\!\!&\!\! \frac{1-\gamma}{2}\!&\!\!0\!\! \\
            \!\!0\!\!&\!\!\frac{(1-\alpha)\gamma}{2}\!\!&\!\!0\! \!&\! \!\frac{\alpha\gamma}{2}\!\!&\!\!0\!\! &\!\!0\!\!&\!\!0\!\!&\!\!0\!\!&\!\!0\!\!&\! \!\frac{1-\gamma}{2}\!\! \\
            \!\!0\!\!&\!\!0\!\! &\!\!0&\!\!0\!\!&\!\!\frac{\alpha\gamma}{2}\!\!&\!\!0\! \!&\!\! \frac{(1-\alpha)\gamma}{2}\!\!&\!\!0\!\!&\!\! \frac{1-\gamma}{2}\!\!&\!\!0\!\!\\
            \!\!0\!\!&\!\!0 \!\!&\!\!0\!\!&\!\!0\!\!\!&\!\!0\!\!&\!\!\frac{\alpha\gamma}{2}\!\!&\!\!0 \!\!&\!\! \frac{(1-\alpha)\gamma}{2}\!\!&\!\!0\!\!&\!\! \frac{1-\gamma}{2}\!\!\\
           \!\!1\!\! &\!\! 1\!\! &\!\! 0\!\! &\!\! 0\!\! &\!\! 0\!\! &\!\! 0\!\! &\!\! 0\!\! &\!\! 0\!\! &\!\! 0\!\! &\!\! 0\!\!\\
           \!\!0 \!\!& \!\!0 \!\!& \!\!1 \!\!& \!\!1\!\! & \!\!0 \!\!& \!\!0\!\! & \!\!0\!\! &\!\! 0 \!\!& \!\!0 \!\!& \!\!0\!\!\\
           \!\!0\!\! &\!\! 0\!\! &\!\! 0\!\! &\!\! 0\!\! &\!\! 1\!\! &\!\! 1\!\!& \!\!0\!\! &\!\! 0\!\! &\!\! 0\!\! &\!\! 0\!\! \\
            \!\!0\!\! &\!\! 0\!\! &\!\! 0\!\! &\!\! 0\!\!& \!\!0\!\! &\!\! 0\!\! &\!\! 1\!\! &\!\! 1\!\!& \!\!0 \!\!& \!\!0\!\!\\
            \!\!0\!\!&\!\!0\!\!&\!\!0\!\! &\!\! 0\!\! &\!\! 0\!\! &\!\! 0\!\!&\!\! 0\!\! &\!\! 0\!\! &\!\! 1\!\! &\!\! 1\!\!\\
         \end{array}
       \right]}},\nonumber
\end{eqnarray}
$$\mathbf c=[1-\alpha, \alpha, \alpha, 1-\alpha, 1, 1, 1,1,1]^T.$$

The following steps are similar to those in Examples 1 and 2. Using our algorithm, we can find that, for any given values $\alpha$ and $\gamma$, as long as $\gamma\ge 1-2\alpha$,  $h^*=0$, and the simulatability condition holds. We can also obtain a possible simulatability  channel $P_{\bar X|Z}$ that Eve may use, following the same steps as in Example 1. On the other side, if $\gamma< 1-2\alpha$, we obtained $h^*<0$, and hence the simulatability condition does not hold.

\section{Complexity Reduction} \label{sec:recom}

In Proposition~\ref{prop:1}, we show that different choices of $\mathbb A^g$ will not affect the result on whether $h^*$ equals zero or not. However, different choices of $\mathbb A^g$ may affect the amount of computation needed. Primal-dual path-following method is one of the best methods for solving LP of the following form \cite{Bazaraa:JWS:2011}:
\begin{eqnarray}
&&\min\limits_{\mathbf t}\textbf{t}^T\mathbf b\nonumber\\
\text{s.t.}&&\mathbf{t}\succeq \mathbf 0,\nonumber\\
&&\mathbf B\mathbf t=\mathbf d,\nonumber
\end{eqnarray}
in which $\mathbf B$ is a matrix of size $m\times n$. The complexity is related to the size of $\mathbf B$. In particular, in terms of $m$ and $n$, the complexity is $O((nm^2+n^{1.5}m)L)$~\cite{Monteiro:MP:1989a,Monteiro:MP:1989}. In LP~\eqref{eq:xy15} constructed in the proof of Theorem~\ref{thm:2}, $\mathbf B=(\mathbf I-\mathbb A^g\mathbb A)^T$, which is an $n\times n$ matrix, and hence the complexity is $O(n^3L)$ as mentioned in Section~\ref{sec:MR}.


In the following, we show that if we choose the g-inverse of $\mathbb A$ to be $\mathbb A^+$, the Moore-Penrose inverse, the problem size can be reduced by some further transformations.
Let the SVD of $\mathbb A$ be $\mathbf U\mathbf \Sigma \mathbf V^T$. Then $\mathbb A^+=\mathbf V\mathbf \Sigma^+ \mathbf U^T$. Suppose $\text{rank}(\mathbf\Sigma_{m\times n})=r$ and set $s=n-r$.
We have
\begin{eqnarray}
\mathbb A^+\mathbb A &=&\mathbf V\mathbf \Sigma^+\mathbf U^{T}\mathbf U \mathbf\Sigma\mathbf V^T\nonumber\\
&=&\mathbf V\left[
                          \begin{array}{cc}
                            \mathbf I_{r} & \mathbf 0_{r\times s} \\
                            \mathbf 0_{s\times r} & \mathbf 0_{s\times s} \\
                          \end{array}
                        \right]\mathbf V^T.
\end{eqnarray}

As discussed in the proof of Theorem~\ref{thm:2}, checking $\text{Sim}_{Y}(Z\rightarrow X)$ holds or not is equivalent to checking whether
\begin{eqnarray}
(\mathbf I-\mathbb A^+\mathbb A)\mathbf p\preceq \mathbb A^+ \mathbf c\label{eq:cond}
\end{eqnarray}
has a solution or not. We now perform some transformations on~\eqref{eq:cond}. First we have
\begin{eqnarray}
\mathbf I-\mathbb A^+\mathbb A&&\hspace{-6mm}=\mathbf V\left[
                          \begin{array}{cc}
                            \mathbf  I_{r} \!\!\!&  \!\!\!\mathbf 0_{r\times s} \!\!\! \\
                            \!\!\! \mathbf 0_{s\times r} \!\!\! & \!\!\! \mathbf I_s \!\!\! \\
                          \end{array}
                        \right]\mathbf V^T-\mathbf V\left[
                          \begin{array}{cc}
                            \!\!\! \mathbf I_{r} \!\!\! &  \!\!\!\mathbf 0_{r\times s}  \!\!\!\\
                           \!\!\!  \mathbf 0_{s\times r}  \!\!\!&  \!\!\!\mathbf 0_{s\times s} \!\!\! \\
                          \end{array}
                        \right]\mathbf V^T \nonumber\\
                        &&\hspace{-6mm}=\mathbf V\left[
  \begin{array}{cc}
    \textbf 0_{r\times r} & \textbf 0_{r\times s} \\
    \textbf 0_{s\times r} & \textbf I_{s} \\
  \end{array}
\right]\mathbf V^T.
\end{eqnarray}

Hence, \eqref{eq:cond} is equivalent to
\begin{eqnarray}
\mathbf V\left[
  \begin{array}{cc}
    \textbf 0_{r\times r} & \textbf 0_{r\times s} \\
    \textbf 0_{s\times r} & \textbf I_{s} \\
  \end{array}
\right]\mathbf V^T\mathbf p\preceq \mathbb A^+ \mathbf c.\label{eq:36}
\end{eqnarray}
 $\mathbf V$ can be split into four blocks as
\begin{eqnarray}
\mathbf V=\left[
                             \begin{array}{cc}
                               \mathbf V_{r\times r} & \mathbf V_{r\times s} \\
                               \mathbf V_{s\times r} & \mathbf V_{s\times s} \\
                             \end{array}
                           \right].
\end{eqnarray}
We use $\mathbf w$ to denote the $n\times 1$ column vector $\mathbf V^T\mathbf p$, i.e.,
\begin{eqnarray}
\mathbf w=\mathbf V^T\mathbf p.
\end{eqnarray}
Note that $\mathbf p \leftrightarrow\mathbf w$ is a reversible bijection, since $\mathbf V^T$ is a full rank matrix.\\
Then \eqref{eq:36} is equivalent to
\begin{eqnarray}
\left[
  \begin{array}{cc}
    \mathbf 0_{r\times r} &  \mathbf V_{r\times s} \\
     \mathbf 0_{s\times r} & \mathbf V_{s\times s} \\
  \end{array}
\right]\left[
         \begin{array}{c}
            \mathbf w_{r\times 1} \\
            \mathbf w_{s\times 1} \\
         \end{array}
       \right]\preceq \mathbb A^+ \mathbf c,
\end{eqnarray}
which is equivalent to
\begin{eqnarray}
       \left[
         \begin{array}{c}
            \mathbf V_{r\times s} \\
          \mathbf V_{s\times s} \\
         \end{array}
       \right]\left[
                \begin{array}{c}
                  \mathbf w_{s\times 1} \\
                \end{array}
              \right]\preceq \mathbb A^+ \mathbf c.   \label{eq:41}
\end{eqnarray}
Hence, checking whether~\eqref{eq:cond} has a solution or not is equivalent to checking whether~\eqref{eq:41} has a solution or not.
To check whether~\eqref{eq:41} has a solution or not, we can construct a new LP for~\eqref{eq:41} in the same way as in the proof in Theorem~\ref{thm:2}. However, the size of the newly constructed LP will be smaller than that of~\eqref{eq:xy15} constructed in the proof of Theorem~\ref{thm:2}. The complexity for the newly constructed LP will be $O((ns^2+n^{1.5}s)L)$. Since $s$ is always less than or equal to $n$ (sometimes, $s$ can be much less than $n$) and that $L$ doesn't change, compared with the LP~\eqref{eq:xy15}, the computational complexity for this new LP will be reduced. 

\section{Conclusion}\label{sec:conc}

In this paper, we have proposed an efficient algorithm to check the simulatability condition, an important condition in the problems of secret key generation using a non-authenticated public channel. We have also proposed a simple and flexible method to calculate a possible simulatability channel if the simulatability condition holds. The proposed algorithms have polynomial complexities. We have presented numerical examples to show the efficiency of the protocol. Finally, we have proposed an approach to further reduce the computational complexity.


\appendices
\section{Farkas' Lemma}\label{app:farkas}
There are several equivalent forms of the Farkas' lemma\cite{Alex:san:1998}. Here, we state a form that will be used in our proof.
\begin{lem}{(Farkas' Lemma \cite{Alex:san:1998})} \label{thm:3}
Let $\mathbf B$ be a matrix, and $\mathbf b$ be a vector, then the system specified by $\mathbf B\mathbf{p}\preceq \mathbf b$, has a solution $\mathbf{p}$, if and only if $\mathbf{t}^T\mathbf b\ge 0$ for each column vector $\mathbf{t}\succeq \mathbf 0$ with $\mathbf B^T\mathbf{t}=\mathbf0$.
\end{lem}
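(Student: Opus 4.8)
The plan is to prove the two implications of the biconditional separately, with essentially all of the work concentrated in one of them. The forward direction is a one-line computation: if $\mathbf B\mathbf p\preceq\mathbf b$ admits a solution $\mathbf p$, then $\mathbf b-\mathbf B\mathbf p\succeq\mathbf 0$, so for any $\mathbf t\succeq\mathbf 0$ with $\mathbf B^T\mathbf t=\mathbf 0$ the inner product with the nonnegative vector $\mathbf t$ gives $\mathbf t^T\mathbf b\ge\mathbf t^T\mathbf B\mathbf p=(\mathbf B^T\mathbf t)^T\mathbf p=0$. Hence it remains to prove the converse, which I would establish in contrapositive form: assuming the system $\mathbf B\mathbf p\preceq\mathbf b$ is infeasible, I will exhibit a vector $\mathbf t\succeq\mathbf 0$ with $\mathbf B^T\mathbf t=\mathbf 0$ and $\mathbf t^T\mathbf b<0$.

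For the converse, write $\mathbf B$ as an $m\times n$ matrix and introduce the set $C=\{\mathbf B\mathbf p+\mathbf s:\ \mathbf p\in\mathbb R^{n},\ \mathbf s\succeq\mathbf 0\}\subseteq\mathbb R^{m}$. Two observations drive the argument. First, $\mathbf B\mathbf p\preceq\mathbf b$ is solvable precisely when $\mathbf b\in C$, since $\mathbf b-\mathbf B\mathbf p\succeq\mathbf 0$ is the same as $\mathbf b=\mathbf B\mathbf p+\mathbf s$ for some $\mathbf s\succeq\mathbf 0$. Second, $C$ is the conic hull of the finite collection consisting of the columns of $\mathbf B$, their negatives, and the standard unit vectors $\mathbf e_1,\dots,\mathbf e_m$; hence $C$ is a finitely generated convex cone, and by the Minkowski--Weyl theorem it is closed. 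Under the infeasibility hypothesis $\mathbf b\notin C$, and since $C$ is a nonempty closed convex set, the separating-hyperplane theorem yields $\mathbf t\in\mathbb R^{m}$ and $\alpha\in\mathbb R$ with $\mathbf t^T\mathbf b<\alpha\le\mathbf t^T\mathbf y$ for every $\mathbf y\in C$.

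The remaining step is to read off the three required properties of $\mathbf t$ from the separation. Since $\mathbf 0\in C$ we get $\alpha\le 0$, hence $\mathbf t^T\mathbf b<0$. Since $C$ is a cone, $\mathbf t^T\mathbf y\ge 0$ for all $\mathbf y\in C$: if $\mathbf t^T\mathbf y_0<0$ for some $\mathbf y_0\in C$, then $\lambda\mathbf y_0\in C$ for all $\lambda>0$ and $\mathbf t^T(\lambda\mathbf y_0)\to-\infty$, contradicting the lower bound $\alpha$. Specializing $\mathbf y=\mathbf B\mathbf p$ and $\mathbf y=-\mathbf B\mathbf p$ (both with $\mathbf s=\mathbf 0$) for arbitrary $\mathbf p$ forces $\mathbf t^T\mathbf B\mathbf p=0$ for every $\mathbf p$, i.e. $\mathbf B^T\mathbf t=\mathbf 0$; specializing $\mathbf y=\mathbf e_i$ forces $t_i\ge 0$, i.e. $\mathbf t\succeq\mathbf 0$. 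This produces the desired certificate and closes the contrapositive.

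The one nontrivial ingredient is the closedness of the finitely generated cone $C$ (the Minkowski--Weyl / Farkas--Minkowski--Weyl fact), and this is what I expect to be the main obstacle: a bare separating-hyperplane argument applied to an arbitrary convex set only yields a weak inequality and could leave $\mathbf b$ on the boundary of $C$, so closedness (hence strict separation) is essential. If one prefers to keep the appendix self-contained I would instead prove the converse by Fourier--Motzkin elimination: eliminate the coordinates of $\mathbf p$ from $\mathbf B\mathbf p\preceq\mathbf b$ one at a time, observing that each new system consists of nonnegative combinations of the inequalities of the previous one; after all coordinates of $\mathbf p$ are removed one is left with inequalities of the form $0\le\mathbf t_k^T\mathbf b$ with $\mathbf t_k\succeq\mathbf 0$ and $\mathbf B^T\mathbf t_k=\mathbf 0$, and infeasibility of the original system means $\mathbf t_k^T\mathbf b<0$ for some $k$, which is again exactly the required certificate.
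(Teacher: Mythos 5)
Your argument is correct. Note, however, that the paper itself offers no proof of this statement: it is quoted verbatim from Schrijver's book and used as a black box in the proof of Theorem~\ref{thm:2}, so there is no in-paper argument to compare against. What you have supplied is a standard, self-contained proof of the affine (inequality-form) Farkas lemma. The forward direction is exactly the obvious nonnegativity computation. For the converse you correctly identify the crux: the set $C=\{\mathbf B\mathbf p+\mathbf s:\mathbf p\in\mathbb R^{n},\ \mathbf s\succeq\mathbf 0\}$ is a finitely generated cone (generated by the columns of $\mathbf B$, their negatives, and the standard basis vectors), feasibility of $\mathbf B\mathbf p\preceq\mathbf b$ is equivalent to $\mathbf b\in C$, and the only nontrivial input is that finitely generated cones are closed, which licenses strict separation and hence a certificate $\mathbf t$ with $\mathbf t^{T}\mathbf b<0$, $\mathbf t\succeq\mathbf 0$, $\mathbf B^{T}\mathbf t=\mathbf 0$. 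The bookkeeping that extracts these three properties from the separating functional (using $\mathbf 0\in C$, conicity, the $\pm\mathbf B\mathbf p$ substitution, and the $\mathbf e_i$ substitution) is all sound. Your fallback via Fourier--Motzkin elimination is likewise a valid, fully elementary alternative that avoids the closedness lemma entirely; either version would serve as a self-contained replacement for the citation in Appendix~\ref{app:farkas}.
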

\bibliographystyle{ieeetr}

\end{document}